\documentclass[sigconf]{acmart}
\usepackage{amsfonts}
\usepackage{amsmath}
\usepackage{amsthm}
\usepackage{array}
\usepackage{bm}
\usepackage{booktabs}
\usepackage{bxtexlogo}
\usepackage{caption}
\usepackage{enumitem}
\usepackage{float}
\usepackage{geometry}
\usepackage{graphicx}
\graphicspath{{figs/}}
\usepackage[utf8]{inputenc}
\usepackage{makecell}
\usepackage{mathtools}
\usepackage{multirow}
\usepackage{natbib}
\setcitestyle{authoryear,numbers,aysep={}}
\usepackage{soul}
\usepackage{stfloats}
\usepackage{subcaption}
\usepackage[switch]{lineno}
\def\innerprod<#1>{\langle #1 \rangle}

\newtheorem{theorem}{Theorem}

\newtheorem{proposition}{Proposition}

\theoremstyle{definition}

\theoremstyle{remark}

\newcommand{\hashimoto}[1]{\textcolor{black}{#1}}
\AtBeginDocument{%
  }

\renewcommand\footnotetextcopyrightpermission[1]{} 
\begin{document}

%%
%% The "title" command has an optional parameter,
%% allowing the author to define a "short title" to be used in page headers.
\title{Rethinking Synthetic Scenario Realism:\\ Compatibility, Not Fidelity, Drives Hedging Performance}

\author{Ryuji Hashimoto}
\affiliation{%
  \institution{The University of Tokyo; and Preferred Networks, Inc.}
  \city{Tokyo}\country{Japan}
}
\email{hashimoto-ryuji419@g.ecc.u-tokyo.ac.jp}
\orcid{0009-0008-0042-1477}

\author{Masanori Hirano}
\affiliation{%
  \institution{Preferred Networks, Inc.}
  \city{Tokyo}
  \country{Japan}
}
\email{research@mhirano.jp}
\orcid{0000-0001-5883-8250}

\author{Ryota Ozaki}
\affiliation{%
  \institution{Preferred Networks, Inc.}
  \city{Tokyo}
  \country{Japan}
}
\email{ryota55ozaki@preferred.jp}
\orcid{0009-0001-8215-4841}

\author{Kentaro Imajo}
\affiliation{%
  \institution{Preferred Networks, Inc.}
  \city{Tokyo}
  \country{Japan}
}
\email{imos@preferred.jp}
\orcid{0000-0003-2864-2905}

\renewcommand{\shortauthors}{Hashimoto et al.}

%%
%% The abstract is a short summary of the work to be presented in the
%% article.
\begin{abstract}
Deep hedging is a data-driven approach to learn hedging strategies. It relies on synthetic price paths generator, as real market data is often limited for training. Existing approaches primarily evaluate such generators based on realism, i.e., how well they capture statistical properties of real markets, but the relationship between realism and hedging performance remains unclear.
In this work, we introduce a decision-centric perspective on synthetic data for deep hedging based on the notion of compatibility. Compatibility measures the extent to which strategies trained on synthetic scenarios remain effective in the true market. We theoretically show that 1) hedging performance decomposes into learning error and a compatibility gap, and 2)  realism and compatibility can diverge.
Empirically, we find that hedging performance is governed not by realism alone, but by the alignment between the generator and the hedger, together with task structure. Taken together, this work provides a principled basis for designing synthetic data in finance aligned with decision tasks.\footnote{This work was conducted at Preferred Networks, Inc.}
\end{abstract}

%%
%% The code below is generated by the tool at http://dl.acm.org/ccs.cfm.
%% Please copy and paste the code instead of the example below.
%%
% \begin{CCSXML}
% <ccs2012>
%  <concept>
%   <concept_id>00000000.0000000.0000000</concept_id>
%   <concept_desc>Do Not Use This Code, Generate the Correct Terms for Your Paper</concept_desc>
%   <concept_significance>500</concept_significance>
%  </concept>
%  <concept>
%   <concept_id>00000000.00000000.00000000</concept_id>
%   <concept_desc>Do Not Use This Code, Generate the Correct Terms for Your Paper</concept_desc>
%   <concept_significance>300</concept_significance>
%  </concept>
%  <concept>
%   <concept_id>00000000.00000000.00000000</concept_id>
%   <concept_desc>Do Not Use This Code, Generate the Correct Terms for Your Paper</concept_desc>
%   <concept_significance>100</concept_significance>
%  </concept>
%  <concept>
%   <concept_id>00000000.00000000.00000000</concept_id>
%   <concept_desc>Do Not Use This Code, Generate the Correct Terms for Your Paper</concept_desc>
%   <concept_significance>100</concept_significance>
%  </concept>
% </ccs2012>
% \end{CCSXML}

% \ccsdesc[500]{Do Not Use This Code~Generate the Correct Terms for Your Paper}
% \ccsdesc[300]{Do Not Use This Code~Generate the Correct Terms for Your Paper}
% \ccsdesc{Do Not Use This Code~Generate the Correct Terms for Your Paper}
% \ccsdesc[100]{Do Not Use This Code~Generate the Correct Terms for Your Paper}

%%
%% Keywords. The author(s) should pick words that accurately describe
%% the work being presented. Separate the keywords with commas.
\keywords{Deep hedging, Synthetic scenario generator, Compatibility, Realism}
%% A "teaser" image appears between the author and affiliation
%% information and the body of the document, and typically spans the
%% page.

% \received{20 February 2007}
% \received[revised]{12 March 2009}
% \received[accepted]{5 June 2009}

%%
%% This command processes the author and affiliation and title
%% information and builds the first part of the formatted document.
\maketitle

\section{Introduction}
% ヘッジは金融機関にとって重要
Hedging is a fundamental task for financial institutions, playing a central role in risk management. \hashimoto{In derivatives markets, where exposures evolve dynamically over time, constructing effective hedging strategies constitutes a complex sequential decision-making problem under uncertainty.}

% deep hedging．学習では，実データは少ないのでブラウン運動などのシナリオ生成器で代替する
In recent years, deep hedging~\citep{deep_hedging}, a data-driven approach that leverages deep learning for hedging strategies, has gained attention. \hashimoto{In this framework, neural networks are used to directly optimize hedging policies, enabling flexible treatment of path-dependent price dynamics, complex payoff structures, and realistic trading constraints that are often analytically intractable in traditional frameworks.}
 In deep hedging, the limited availability of real market data typically leads to a reliance on scenario generators, which simulate synthetic \hashimoto{underlying} price paths for training~\citep{deep_hedging_w_jump}.

% しかし，シナリオ生成器の設計原理は定まっていない．既存研究ではrealism-centricな評価が一般的 (現実的な系列であるほど良い，という暗黙の仮定) だが，シナリオ生成器がrealistic -> hedgerがbetterという非自明な因果関係は検証されていない．hedgerの解釈性の重要性が指摘される中，シナリオ生成器の設計手法の知見も必要
Despite its central role, the design principles of scenario generators remain poorly understood. In the literature on synthetic data in finance, evaluation is predominantly based on the extent to which generated price paths reproduce statistical properties of real markets, i.e., realism~\citep{realism_based_evaluation, get_real}. Underlying this practice is the implicit assumption that more realistic scenarios necessarily lead to better outcomes. In high-stakes financial settings, however, a critical question is whether hedging strategies trained on synthetic data remain effective after deployment in real markets. Yet, whether increasing the realism of scenario generators indeed leads to improved hedging performance remains an open question.

% 本研究のアイデア: シナリオ生成器の設計を，現実再現の問題から，戦略クラスとの整合性（compatibility）を最適化する問題へと再定式化
% シナリオ生成器のCompalibilityとは:「生成されたシナリオのもとで学んだヘッジ戦略が，真の環境でどれだけ有効に機能するか」を表す量．→本研究の貢献①: deep hedgingにおけるhedger performanceの低下要因が，hedgerの学習誤差とシナリオ生成器のcompatibility gapの和として分解できることを証明した．
% implication: 重要なのは，compatibilityがシナリオ生成器単体の性質ではなく，「生成器 × hedger class」の関係で決まる量だという点．→本研究の貢献❷: CompatibilityとRealismが構造的に乖離可能であることを示した
We reformulate the design of scenario generators not as a problem of fidelity to real-world price dynamics, but as one of optimizing their alignment with a given class of hedging strategies. To this end, we introduce the notion of {\em compatibility}, defined as the extent to which a hedging strategy learned under synthetic scenarios remains effective under the true market environment. Under this formulation, we first show that the degradation in hedging performance in deep hedging can be decomposed into i) the learning error arising from finite data and optimization, and ii) the compatibility gap induced by the mismatch between the true and synthetic distributions. Importantly, compatibility is not an intrinsic property of the scenario generator alone, but depends on the interaction between the scenario generator and the \hashimoto{hedging model used for training}. As a consequence, we further prove that compatibility and realism need not be aligned, and can in fact diverge structurally.

% 実験:
%     異なるシナリオ生成器・ヘッジモデル・タスク設定を網羅的に検証
%     シナリオ生成器を，単独のrealism，hedgerを学習する際の学習の安定性，hedger performanceの三つの観点から検証
% 結果:
%     シナリオ生成器のRealismの向上は，任意のHedger / オプションに対し単調に性能向上させるわけではなく，Hedgerクラスやオプションの種類を変えるとシナリオ生成器ごとのHedger Performanceの順位は大きく入れ替わる
%     >> (generator-hedger alignment) Hedgerの表現力や仮定する戦略クラスと，生成器の再現可能なデータ構造の整合性が性能を決める
%     >> (generator-option alignment) 経路依存性，取引コストのようなタスクの複雑性が，生成器に要求する構造を決める
To empirically validate our perspective, we conduct a comprehensive study across diverse scenario generators, hedging models, and task settings. We evaluate each scenario generator along three axes: (i) the realism of generated price paths, (ii) the learnability of hedgers in terms of stability and generalization, and (iii) the resulting hedging performance. We find that improvements in realism do not monotonically translate into better performance: changing the hedger class or the option setting leads to substantial reordering of generator rankings. These results indicate that performance is governed not by properties of the generator in isolation, but by the alignment between the generator and the hedger, i.e., how the data structures induced by the generator match the inductive biases of the hedger. Furthermore, option characteristics such as path dependency and transaction costs determine the structural requirements imposed on the generator, thereby shaping performance outcomes.

\section{Related Work}
% deep hedgingの二種類の方向性: hedger機構，シナリオ生成器
Research on deep hedging~\citep{deep_hedging} can be broadly categorized into two directions: the design of hedger learning mechanisms and that of scenario generators. On the hedger side, prior work has explored a wide range of learning mechanisms, including reinforcement learning~\citep{deep_rl_hedging1,deep_rl_hedging2,deep_rl_hedging3}, adversarial training frameworks that avoid explicit price process modeling~\citep{adversarial_hedging}, transformer-based architectures for capturing path dependencies~\citep{sig_former}, efficient learning frameworks~\citep{ntb_net,fast_deep_hedging}, and interpretable hedging based on market-prototype~\citep{proto_hedging}. On the scenario generator side, prior work has explored extensions of classical stochastic models, such as rough volatility models~\citep{deep_hedging_w_jump} and agent-based models~\citep{deeper_hedging}, as well as deep generative models, including GANs~\citep{deep_hedging_w_gan} and diffusion models~\citep{cofindiff}. These methods aim to produce realistic price dynamics for training, typically focusing on the faithful reproduction of statistical properties observed in financial time series~\citep{stylized_facts}. In contrast, some approaches dispense with explicit scenario generation altogether and instead learn hedging strategies directly from historical data~\citep{empirical_deep_hedging}.

Despite these advances, the two lines of research remain largely disconnected. While the design of hedging models is driven by decision performance, scenario generators are typically evaluated based on their fidelity to real-world data. As a result, it remains unclear how the properties of a scenario generator influence the hedging performance. In this work, we bridge this gap by reformulating the design of scenario generators as a decision-centric problem.

\section{Problem Formulation}
\paragraph{\textbf{Market Environment}} \hashimoto{We consider a hedging problem for a single underlying asset. Let $\Omega$ denote the set of possible price paths, and let $\mathcal{P}(\Omega)$ denote the set of probability measures on $\Omega$.}
 A price path is denoted by $\bm{\omega}=(S_0,\ldots,S_T)\in\Omega$, where $S_t\in\mathbb{R}_+$ denotes the spot price at time $t$. Let $P \in \mathcal{P}(\Omega)$ be the true probability law of price paths. $P$ represents the real market dynamics.

\paragraph{\textbf{Hedging Strategies}} \hashimoto{Let $\mathcal{Y}_t$ denote the observation space available at time $t$. The hedger observes a process $Y_t=\psi_t(\bm{\omega}),~~ t=0,\ldots,T$, where $\psi_t:\Omega\rightarrow\mathcal{Y}_t$ is a measurable observation map.} A hedging strategy is a sequence of measurable decision rules $\bm{h}=(h_0,\ldots,h_{T-1})\in\mathcal{H}$, where $h_t:\mathcal{I}_{t}\rightarrow\mathbb{R}$ determines the position in the underlying asset at time $t$, and $\mathcal{H}$ denotes the set of admissible hedging strategies.

\paragraph{\textbf{Hedging Loss and Risk Functional}} Let $C(\bm{\omega},\bm{h})$ denote the terminal profit-and-loss incurred by hedging strategy $\bm{h}$ along path $\bm{\omega}$. Assume that $C:\Omega\times\mathcal{H}\rightarrow\mathbb{R}$ is measurable. Let $\rho:L^0(\Omega)\times\mathcal{P}(\Omega)\rightarrow\mathbb{R}$ be a risk functional used to evaluate hedging performance. For any probability measure $\mu$ on $\Omega$, define the population risk $\rho_\mu(\bm{h})\coloneq\rho_\mu(C(\bm{\omega},\bm{h}))$, where the subscript indicates that the expectation underlying $\rho$ is taken with respect to $\mu$. The objective of hedging problem is therefore to select the hedging strategy that minimizes the risk of its resulting profit-and-loss distribution under the true market dynamics: $\min_{\bm{h}\in\mathcal{H}}\rho_P(\bm{h})$.

\begin{figure}[t]
    \centering
    \includegraphics[width=0.49\textwidth]{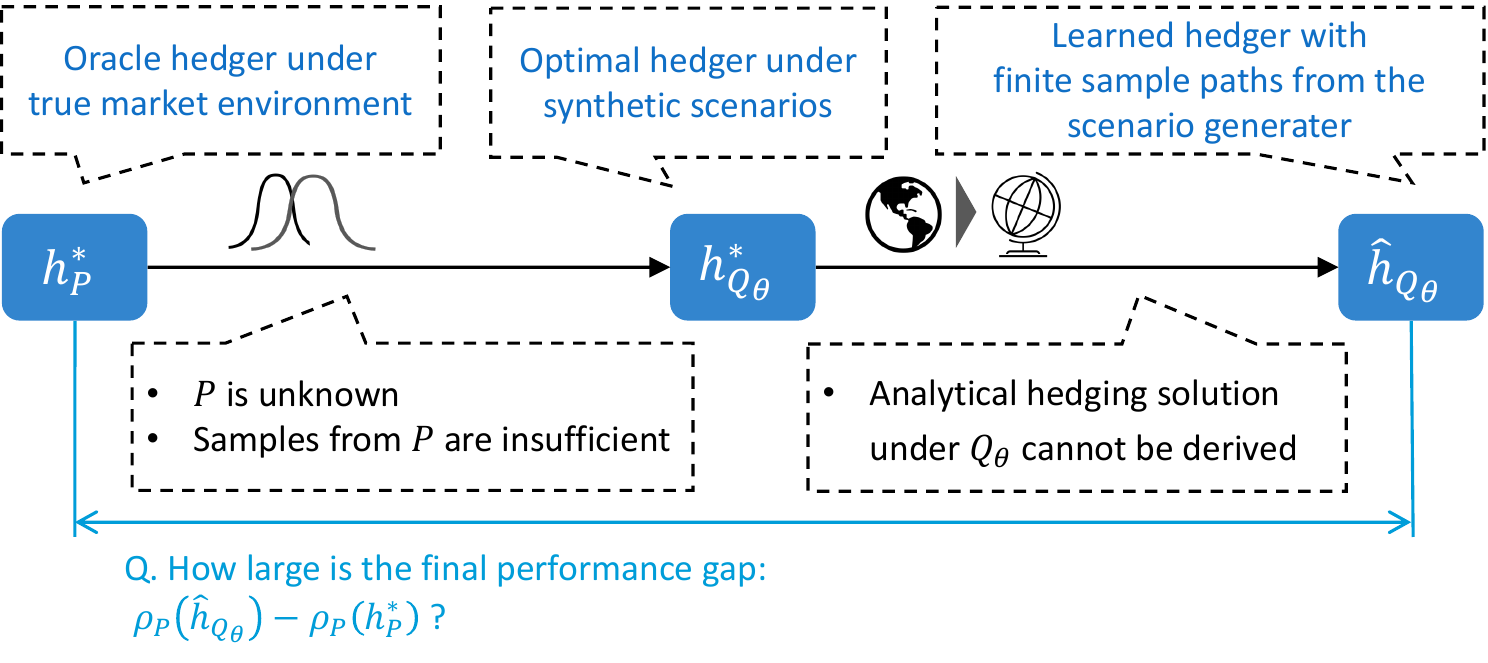}
    \caption{Two-layered approximation in deep hedging. The optimal hedger under the true distribution $P$, $\bm{h}_P^*$, is approximated via (i) replacing $P$ with the generator-induced $Q_\theta$ to obtain $\bm{h}_{Q_\theta}^*$, and (ii) learning $\hat{\bm{h}}_{Q_\theta}$ from finite samples.}
    \label{Fig:two_layered_approx}
\end{figure}

\begin{figure}[t]
    \centering
    \includegraphics[width=0.49\textwidth]{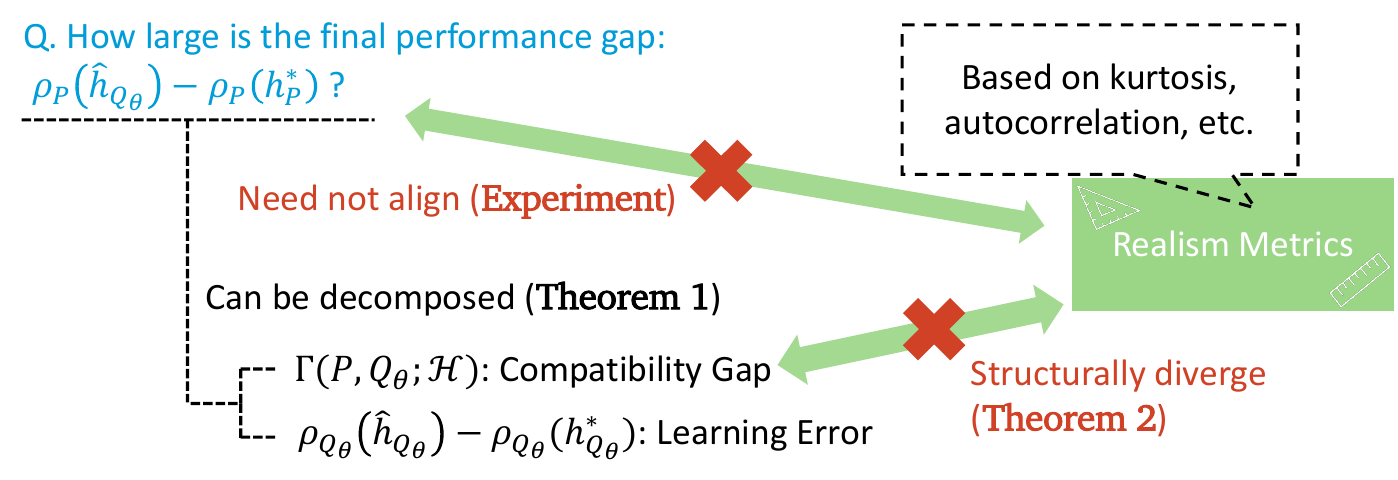}
    \caption{Schematic summary of the relationships among the main results of this paper. The performance gap decomposes into compatibility and learning error, and commonly used realism metrics can structurally diverge from compatibility and need not align with performance in practice.}
    \label{Fig:main_claims}
\end{figure}

\paragraph{\textbf{Training Hedgers with Scenario Generator}} Define the optimal hedger under the true environment:
\begin{align}
\bm{h}_P^*\coloneq\arg\min_{\bm{h}\in\mathcal{H}}\rho_P(\bm{h}).\label{Eq:optimal_hedger}
\end{align}
$\bm{h}_P^*$ represents the best available strategy within $\mathcal{H}$ under the true market dynamics. However, in practice, the true process $P$ is unknown. Instead, a scenario generator is used to produce training paths. Let $Q_\theta\in\mathcal{P}(\Omega)$ \hashimoto{denote the price-path distribution} induced by a generator parametrized by $\theta$. Training then minimizes the risk under $Q_\theta$:
\begin{align}
\bm{h}_{Q_\theta}^*\coloneq\arg\min_{\bm{h}\in\mathcal{H}}\rho_{Q_\theta}(\bm{h}).
\end{align}
Our goal is therefore to obtain a hedger $\bm{h}_{Q_\theta}^*$ that achieves performance close to the optimal hedger $\bm{h}_P^*$. This requires that training under generator-induced distribution $Q_\theta$ provide a risk landscape similar to that under $P$ over the hedger class $\mathcal{H}$. We refer to this property as \textit{generator--hedger compatibility}, and define it as
\begin{align}
\Gamma(P,Q_\theta;\mathcal{H})\coloneq\sup_{\bm{h}\in\mathcal{H}}\left|\rho_P(\bm{h})-\rho_{Q_\theta}(\bm{h})\right|.
\end{align}
\hashimoto{Moreover, deriving the optimal hedging strategy under $Q_\theta$ can become computationally challenging in high-dimensional settings with realistic trading constraints and path-dependent features.}
 Consequently, in practice, training the hedger under $Q_\theta$ produces an estimator $\hat{\bm{h}}_{Q_\theta}$ obtained from a finite sample of training paths. Let $\hat{\rho}_{Q_\theta}(\bm{h})$ denote the empirical estimate of the risk, computed from a finite sample of paths $\{\bm{\omega}_i\}_{i=1}^n \sim Q_{\theta}$. In general, this corresponds to evaluating the risk functional on the empirical measure. The hedger actually obtained in practice is given by
\begin{align}
\hat{\bm{h}}_{Q_\theta}\coloneq\arg\min_{h\in\mathcal{H}}\hat{\rho}_{Q_\theta}(\bm{h}).
\end{align}
This formulation highlights that hedging entails two layers of approximation: (i) a distributional approximation, where the true law $P$ is approximated by the generator-induced law $Q_\theta$, and (ii) an optimization approximation, where the population minimizer $\bm{h}_{Q_\theta}^*$ is approximated by the empirical minimizer $\hat{\bm{h}}_{Q_\theta}$. As shown in Figure~\ref{Fig:two_layered_approx}, final performance gap is measured under the true distribution as $\rho_P(\hat{\bm{h}}_{Q_\theta})-\rho_P(\bm{h}_P^*)$. In the following sections (Figure~\ref{Fig:main_claims}), we decompose this gap into the compatibility gap and the learning error, show that compatibility and realism can structurally diverge, and demonstrate that they need not align in practice.

\section{Theoretical Insights for Generator Design}
\subsection{Decomposition of Hedger's Failure}
To understand how the use of a scenario generator affects the performance of the learned hedger, we decompose the excess risk under the true distribution $P$ into contributions from empirical optimization and distributional mismatch.

\begin{theorem}[Decomposition of the true performance gap]\label{Theorem:failure_decomp}
Let $\bm{h}_P^*$, $\bm{h}_{Q_\theta}^*$, and $\hat{\bm{h}}_{Q_\theta}$ be defined as above. Then, the excess risk under the true distribution $P$ satisfies
\begin{align}
\rho_P(\hat{\bm{h}}_{Q_\theta})-\rho_P(\bm{h}_P^*)\leq\underbrace{\bigl(\rho_{Q_\theta}(\hat{\bm{h}}_{Q_\theta})-\rho_{Q_\theta}(\bm{h}_{Q_\theta}^*)\bigr)}_{\text{learning error}}+\underbrace{2\Gamma(P,{Q_\theta};\mathcal{H})}_{\text{compatibility gap}}.\label{Eq:decomp_performance_gap}
\end{align}
\end{theorem}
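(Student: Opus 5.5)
The plan is to use a telescoping decomposition of the excess risk with three terms. We add and subtract risks under $Q_\theta$, evaluated at the learned hedger and at the two population minimizers:
\begin{align*}
\rho_P(\hat{\bm{h}}_{Q_\theta})-\rho_P(\bm{h}_P^*)
&=\bigl(\rho_P(\hat{\bm{h}}_{Q_\theta})-\rho_{Q_\theta}(\hat{\bm{h}}_{Q_\theta})\bigr)
+\bigl(\rho_{Q_\theta}(\hat{\bm{h}}_{Q_\theta})-\rho_{Q_\theta}(\bm{h}_{Q_\theta}^*)\bigr)\\
&\quad+\bigl(\rho_{Q_\theta}(\bm{h}_{Q_\theta}^*)-\rho_{Q_\theta}(\bm{h}_P^*)\bigr)
+\bigl(\rho_{Q_\theta}(\bm{h}_P^*)-\rho_P(\bm{h}_P^*)\bigr).
\end{align*}
This is an exact identity, since the intermediate terms cancel. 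The second bracket is exactly the learning error in \eqref{Eq:decomp_performance_gap}, so it is kept as is.

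Next, we bound the first and fourth brackets using the definition of $\Gamma(P,Q_\theta;\mathcal{H})$. Both $\hat{\bm{h}}_{Q_\theta}$ and $\bm{h}_P^*$ lie in $\mathcal{H}$. Hence each bracket is at most $|\rho_P(\bm{h})-\rho_{Q_\theta}(\bm{h})|$ for some $\bm{h}\in\mathcal{H}$, and so at most the supremum $\Gamma$. Together these two brackets contribute at most $2\Gamma$.

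The third bracket is nonpositive, because $\bm{h}_{Q_\theta}^*$ minimizes $\rho_{Q_\theta}$ over $\mathcal{H}$ and $\bm{h}_P^*\in\mathcal{H}$. It can therefore be dropped, and summing the remaining bounds gives the claimed inequality.

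The only delicate point is well-posedness, and this is where I expect the main obstacle. The argument needs the minimizers $\bm{h}_P^*$ and $\bm{h}_{Q_\theta}^*$ to exist, as the definitions via $\arg\min$ implicitly assume. It also needs all risks to be finite, so that the subtractions are legitimate. I would state these as standing assumptions inherited from the definitions. Notably, the argument uses nothing about $\hat{\bm{h}}_{Q_\theta}$ beyond membership in $\mathcal{H}$, so no property of the empirical risk $\hat{\rho}_{Q_\theta}$ is needed.

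If the minimizers fail to exist, I would replace $\bm{h}_P^*$ and $\bm{h}_{Q_\theta}^*$ by $\varepsilon$-minimizers, interpreting $\rho_{Q_\theta}(\bm{h}_{Q_\theta}^*)$ as $\inf_{\mathcal{H}}\rho_{Q_\theta}$. The same chain then gives the bound up to an additive $\varepsilon$, and letting $\varepsilon\to0$ recovers the inequality.
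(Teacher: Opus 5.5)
Your proof is correct and is essentially the paper's argument. The paper chains the same steps: it bounds $\rho_P(\hat{\bm{h}}_{Q_\theta})$ by $\rho_{Q_\theta}(\hat{\bm{h}}_{Q_\theta})+\Gamma(P,Q_\theta;\mathcal{H})$, inserts $\rho_{Q_\theta}(\bm{h}_{Q_\theta}^*)$, uses $\rho_{Q_\theta}(\bm{h}_{Q_\theta}^*)\le\rho_{Q_\theta}(\bm{h}_P^*)$, and bounds $\rho_{Q_\theta}(\bm{h}_P^*)-\rho_P(\bm{h}_P^*)$ by $\Gamma(P,Q_\theta;\mathcal{H})$; your four-term telescoping identity packages this into one line, and your well-posedness and $\varepsilon$-minimizer remarks are sound additions (finiteness of the risks is already built in, since the paper takes $\rho$ to be $\mathbb{R}$-valued).
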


\begin{proof}
By the definition of compatibility, for every $\bm{h}\in\mathcal{H}$,
\begin{align}
\left|\rho_P(\bm{h})-\rho_{Q_\theta}(\bm{h})\right|\leq\Gamma(P,Q_\theta;\mathcal{H}).
\end{align}
In particular,
\begin{align}
\rho_P(\hat{\bm{h}}_{Q_\theta})\leq\rho_{Q_\theta}(\hat{\bm{h}}_{Q_\theta})+\Gamma(P,Q_\theta;\mathcal{H}).
\end{align}
Therefore,
\begin{align}
&
\rho_P(\hat{\bm{h}}_{Q_\theta})-\rho_P(\bm{h}_P^*)\nonumber\\
&\leq\rho_{Q_\theta}(\hat{\bm{h}}_{Q_\theta})
-\rho_P(\bm{h}_P^*)+\Gamma(P,Q_\theta;\mathcal{H})
\nonumber\\
&=\rho_{Q_\theta}(\hat{\bm{h}}_{Q_\theta})-\rho_{Q_\theta}(\bm{h}_{Q_\theta}^*)+\rho_{Q_\theta}(\bm{h}_{Q_\theta}^*)-\rho_P(\bm{h}_P^*)+\Gamma(P,Q_\theta;\mathcal{H}).
\end{align}
By the optimality of $\bm{h}_{Q_\theta}^*$ under $Q_\theta$,
\begin{align}
\rho_{Q_\theta}(\bm{h}_{Q_\theta}^*)\leq\rho_{Q_\theta}(\bm{h}_P^*).
\end{align}
Hence,
\begin{align}
&\rho_P(\hat{\bm{h}}_{Q_\theta})-\rho_P(\bm{h}_P^*)\nonumber\\
&\leq\rho_{Q_\theta}(\hat{\bm{h}}_{Q_\theta})-\rho_{Q_\theta}(\bm{h}_{Q_\theta}^*)+\rho_{Q_\theta}(\bm{h}_P^*)-\rho_P(\bm{h}_P^*)+\Gamma(P,Q_\theta;\mathcal{H}).
\end{align}
Again, by the definition of compatibility,
\begin{align}
\rho_{Q_\theta}(\bm{h}_P^*)-\rho_P(\bm{h}_P^*)\leq\left|\rho_{Q_\theta}(\bm{h}_P^*)-\rho_P(\bm{h}_P^*)\right|\leq\Gamma(P,Q_\theta;\mathcal{H}).
\end{align}
Combining the above inequalities yields
\begin{align}
\rho_P(\hat{\bm{h}}_{Q_\theta})-\rho_P(\bm{h}_P^*)\leq
\rho_{Q_\theta}(\hat{\bm{h}}_{Q_\theta})-\rho_{Q_\theta}(\bm{h}_{Q_\theta}^*)+2\Gamma(P,Q_\theta;\mathcal{H}),
\end{align}
which proves Eq.~\eqref{Eq:decomp_performance_gap}.
\end{proof}

The Theorem~\ref{Theorem:failure_decomp} reveals that the final hedging performance is governed by two effects: the learning error arising from finite data and optimization, and the compatibility gap induced by replacing $P$ with $Q_\theta$. This decomposition implies that a generator need not approximate $P$ globally; rather, it suffices that $Q_\theta$ preserves the risk evaluation of strategies in $\mathcal{H}$. In other words, what matters is not global realism, but \emph{decision-relevant realism}, i.e., the extent to which $Q_\theta$ induces a risk landscape aligned with that of $P$ over $\mathcal{H}$. 

Moreover, the Theorem~\ref{Theorem:failure_decomp} suggests a more subtle trade-off: even if using $Q_\theta$ introduces a compatibility gap, it may still lead to superior final performance if the induced simplification reduces the learning error sufficiently. This highlights that an appropriate generator can outperform a more realistic one when it better balances compatibility and learnability for the given hedger class.

\subsection{Mismatch of Realism and Compatibility}
A central question regarding Theorem~\ref{Theorem:failure_decomp} is whether improving the realism of $Q_\theta$ necessarily leads to better hedging performance. By conceptually fixing the learnability of $Q_\theta$ for the hedger, this question can be reformatted as: Does reducing the discrepancy between $P$ and $Q_\theta$ in terms of realism necessarily reduce the compatibility gap $\Gamma(P,Q_{\theta};\mathcal{H})$? This asks whether realism and compatibility are aligned notions, or even identical. To address this question, we distinguish between these two concepts. Realism is a property of the generator $Q_\theta$ relative to the true process $P$, typically defined in terms of distributional proximity. In contrast, compatibility is defined relative to a hedger class $\mathcal{H}$, and measures only those discrepancies between $P$ and $Q_\theta$ that are relevant to $\mathcal{H}$. In other words, compatibility depends on how the difference between $P$ and $Q_\theta$ is \textit{seen} through the function class $\mathcal{H}$.

Below, we first show that, under strong assumptions, these two notions coincide, so that improving realism is sufficient to improve hedging performance. However, once these assumptions are relaxed, the two notions diverge: there exist generators that are highly realistic yet incompatible. 

We specialize the risk functional to the expectation case: $
\rho_\mu(\bm{h})\coloneq\mathbb{E}_{\bm{\omega}\sim\mu}[C(\bm{\omega},\bm{h})]$. Accordingly, the compatibility gap becomes:
\begin{align}
\Gamma(P,Q_{\theta};\mathcal{H})=\sup_{\bm{h}\in\mathcal{H}}\left|\mathbb{E}_P[C(\bm{\omega},\bm{h})]-\mathbb{E}_{Q_\theta}[C(\bm{\omega},\bm{h})]\right|.
\end{align}

\paragraph{\textbf{Realism and Compatibility as Integral Probability Metrics}} We express both realism and compatibility within a unified framework based on integral probability metrics (IPMs). Let $\mathcal{F}_{\text{real}}$ be a class of bounded measurable functions $f:\Omega\rightarrow\mathbb{R}$. The associated realism discrepancy between $P$ and $Q_\theta$ is defined by
\begin{align}
\Delta_{\text{real}}(P,Q_{\theta};\mathcal{F}_{\text{real}})\coloneq \sup_{f\in\mathcal{F}_{\text{real}}}\left|\mathbb{E}_P[f(\bm{\omega})]-\mathbb{E}_{Q_\theta}[f(\bm{\omega})]\right|.\label{Eq:Delta_real}
\end{align}
Typically, each function $f$ extracts a particular feature or statistic of the outcome $\bm{\omega}$, and $\Delta_{\text{real}}$ measures the largest discrepancy between $P$ and $Q_\theta$ over all such feature class $\mathcal{F}_{\text{real}}$. Similarly, define the hedging loss class:
\begin{align}
\mathcal{F}_{\mathcal{H}}\coloneq \left\{f_{\bm{h}}:\Omega\rightarrow\mathbb{R}\mid f_{\bm{h}}(\bm{\omega})=C(\bm{\omega},\bm{h}),\bm{h}\in\mathcal{H}\right\}.
\end{align}
Intuitively, each function $f_{\bm{h}}$ in this class corresponds to the loss induced by a particular hedging strategy $\bm{h}$, mapping each price path $\bm{\omega}$ to its resulting hedging cost. Then the compatibility gap can be written as the IPM over $\mathcal{F}_{\mathcal{H}}$:
\begin{align}
\Gamma(P,Q_{\theta};\mathcal{H})=\sup_{f\in\mathcal{F}_{\mathcal{H}}}\left|\mathbb{E}_P[f(\bm{\omega})]-\mathbb{E}_{Q_\theta}[f(\bm{\omega})]\right|.\label{Eq:Gamma_ipm}
\end{align}
Thus, compatibility is also an IPM, but over the decision-relevant loss function class $\mathcal{F}_{\mathcal{H}}$, rather than over the set of test functions capturing distributional features used in realism evaluation $\mathcal{F}_{\text{real}}$.

\begin{proposition}[A Sufficient Condition for Realism-Compatibility Alignment]\label{Prop:sufficient_alignment_condition}
If $\mathcal{F}_{\mathcal{H}}\subseteq\mathcal{F}_{\text{real}}$, then,
\begin{align}
\forall~P,Q_\theta\quad\Gamma(P,Q_{\theta};\mathcal{H})\leq \Delta_{\text{real}}(P,Q_{\theta};\mathcal{F}_{\text{real}}).
\end{align}
\end{proposition}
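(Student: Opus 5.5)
The argument is a direct monotonicity property of suprema over nested index sets, applied to the IPM representation of compatibility in Eq.~\eqref{Eq:Gamma_ipm}. First, fix arbitrary $P,Q_\theta\in\mathcal{P}(\Omega)$. By Eq.~\eqref{Eq:Gamma_ipm}, $\Gamma(P,Q_\theta;\mathcal{H})$ equals the supremum of $D(f)\coloneq\left|\mathbb{E}_P[f(\bm{\omega})]-\mathbb{E}_{Q_\theta}[f(\bm{\omega})]\right|$ over $f\in\mathcal{F}_{\mathcal{H}}$. By Eq.~\eqref{Eq:Delta_real}, $\Delta_{\text{real}}(P,Q_\theta;\mathcal{F}_{\text{real}})$ is the supremum of the same functional $D$ over $f\in\mathcal{F}_{\text{real}}$.

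Next, take any $\bm{h}\in\mathcal{H}$ and its loss $f_{\bm{h}}=C(\cdot,\bm{h})\in\mathcal{F}_{\mathcal{H}}$. The hypothesis $\mathcal{F}_{\mathcal{H}}\subseteq\mathcal{F}_{\text{real}}$ gives $f_{\bm{h}}\in\mathcal{F}_{\text{real}}$. By the definition of supremum,
\begin{align}
\left|\rho_P(\bm{h})-\rho_{Q_\theta}(\bm{h})\right| = D(f_{\bm{h}})\leq \sup_{f\in\mathcal{F}_{\text{real}}} D(f)=\Delta_{\text{real}}(P,Q_\theta;\mathcal{F}_{\text{real}}).
\end{align}
So $\Delta_{\text{real}}$ is an upper bound for $D(f_{\bm{h}})$ for every $\bm{h}\in\mathcal{H}$. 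The least upper bound, $\Gamma(P,Q_\theta;\mathcal{H})$, therefore satisfies the same bound. Since $P,Q_\theta$ were arbitrary, the claim holds for all pairs.

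The only point needing care is that the expectations are well defined, so that $D$ is a real-valued functional on both classes. Elements of $\mathcal{F}_{\text{real}}$ are assumed bounded and measurable. Under the inclusion hypothesis, every $f_{\bm{h}}$ is then bounded measurable too, so $\mathbb{E}_P[C(\bm{\omega},\bm{h})]$ and $\mathbb{E}_{Q_\theta}[C(\bm{\omega},\bm{h})]$ are finite. This is also what justifies identifying $\Gamma$ with the IPM form in Eq.~\eqref{Eq:Gamma_ipm} under the expectation risk.

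If $\mathcal{F}_{\text{real}}$ is unbounded in the sense that $\Delta_{\text{real}}=+\infty$, the inequality holds trivially in the extended reals. I expect no genuine obstacle beyond stating this integrability point explicitly.
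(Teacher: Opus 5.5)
Your proof is correct and is essentially the paper's argument: both reduce the claim to monotonicity of the supremum of the same IPM functional over the nested classes $\mathcal{F}_{\mathcal{H}}\subseteq\mathcal{F}_{\text{real}}$. Your remarks on integrability and on the extended-real case are harmless additions that the paper leaves implicit.
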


\begin{proof}
Since $\mathcal{F}_{\mathcal{H}}\subseteq\mathcal{F}_{\text{real}}$, 
\begin{align}
\sup_{f\in\mathcal{F}_{\mathcal{H}}}\left|\mathbb{E}_P[f(\bm{\omega})]-\mathbb{E}_{Q_\theta}[f(\bm{\omega})]\right|\leq\sup_{f\in\mathcal{F}_{\text{real}}}\left|\mathbb{E}_P[f(\bm{\omega})]-\mathbb{E}_{Q_\theta}[f(\bm{\omega})]\right|.
\end{align}
This exactly shows the proposition from the definitions (Eqs.~(\ref{Eq:Delta_real},\ref{Eq:Gamma_ipm})).
\end{proof}

The Proposition~\ref{Prop:sufficient_alignment_condition} shows that, if the set of functions used to evaluate realism $\mathcal{F}_{\text{real}}$ includes all loss functions induced by any hedging strategies, then realism upper-bounds compatibility. However, this is a strong condition: the test functions $\mathcal{F}_{\text{real}}$ typically consists of functions capturing summary statistics, such as moments and tail measures, rather than those representing expected hedging losses over a constrained strategy class.

\begin{theorem}[Small Realism does not Imply Compatibility]\label{Theorem:small_realism_incompatibility}
Assume that $\mathcal{F}_{\text{real}}$ is a finite-dimensional linear subspace of bounded measurable functions on $\Omega$. Assume further that there exists some strategy $\bm{h}^\dagger\in\mathcal{H}$ such that
\begin{align}
f_{\bm{h}^\dagger}(\bm{\omega})=C(\bm{\omega},\bm{h}^\dagger)\notin \mathcal{F}_{\text{real}} + \text{span}\{1\}.
\end{align}
Then, there exist probability measures $P,Q_\theta\in\mathcal{P}(\Omega)$ such that
\begin{align}
\Delta_{\text{real}}(P,Q_{\theta};\mathcal{F}_{\text{real}})=0,\quad\Gamma(P,Q_{\theta};\mathcal{H})>0.
\end{align}
\end{theorem}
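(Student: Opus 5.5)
This is a finite-dimensional separation argument. The aim is to build a signed perturbation of a base measure that annihilates $\mathcal{F}_{\text{real}}+\text{span}\{1\}$ but not $f^\dagger\coloneq f_{\bm{h}^\dagger}$. Let $V\coloneq\mathcal{F}_{\text{real}}+\text{span}\{1\}$ and $W\coloneq V+\text{span}\{f^\dagger\}$. Both are finite-dimensional, and by hypothesis $\dim W=\dim V+1$. I assume $f^\dagger$ is bounded, as is implicit in treating $\mathcal{F}_{\mathcal{H}}$ alongside $\mathcal{F}_{\text{real}}$; if it is not, I restrict to finitely supported measures below, where boundedness is irrelevant.

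\textbf{Step 1 (finite reduction).} Choose finitely many paths $\bm{\omega}_1,\ldots,\bm{\omega}_m\in\Omega$ such that the evaluation map $E:W\to\mathbb{R}^m$, $g\mapsto(g(\bm{\omega}_j))_j$, is injective. Such points exist for any finite-dimensional space of functions: pick points inductively so that each new point separates a remaining basis direction. Injectivity of $E$ makes $E(V)$ a proper subspace of $E(W)\subseteq\mathbb{R}^m$.

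\textbf{Step 2 (signed perturbation).} Since $E(f^\dagger)\notin E(V)$, the Hahn--Banach theorem in $\mathbb{R}^m$ gives a vector $c\in\mathbb{R}^m$ with $\sum_j c_j g(\bm{\omega}_j)=0$ for all $g\in V$ and $\sum_j c_j f^\dagger(\bm{\omega}_j)\neq 0$. Because $1\in V$, we also have $\sum_j c_j=0$.

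\textbf{Step 3 (define measures).} Let $P\coloneq\frac1m\sum_j\delta_{\bm{\omega}_j}$ and $Q_\theta\coloneq\sum_j(\frac1m+\varepsilon c_j)\delta_{\bm{\omega}_j}$, where $\varepsilon>0$ is small enough that every weight is positive. The weights sum to one because $\sum_j c_j=0$, so $Q_\theta$ is a probability measure. For every $f\in\mathcal{F}_{\text{real}}\subseteq V$, $\mathbb{E}_P f-\mathbb{E}_{Q_\theta}f=-\varepsilon\sum_j c_jf(\bm{\omega}_j)=0$, hence $\Delta_{\text{real}}(P,Q_\theta;\mathcal{F}_{\text{real}})=0$. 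On the other hand, $\Gamma(P,Q_\theta;\mathcal{H})\ge|\mathbb{E}_Pf^\dagger-\mathbb{E}_{Q_\theta}f^\dagger|=\varepsilon|\sum_jc_jf^\dagger(\bm{\omega}_j)|>0$. Note that the sign of $\varepsilon c$ must be irrelevant for this bound, and it is, since both discrepancies are taken in absolute value.

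The main obstacle is Step 1, producing finitely many points on which $W$ embeds. I would do this by induction on a basis $g_1,\ldots,g_k$ of $W$, using that a nonzero function is nonzero at some point. Suppose the points chosen so far separate $\text{span}\{g_1,\ldots,g_{i-1}\}$. If they fail to separate the span with $g_i$ added, some combination $g_i-\sum_{l<i}a_l g_l$ vanishes on all chosen points; it is a nonzero function, so I add a point where it is nonzero, which restores injectivity. Measurability of $\Omega$'s singletons, needed for the Dirac measures, is standard for path spaces such as $\mathbb{R}_+^{T+1}$.
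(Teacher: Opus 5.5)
Your proof is correct and is essentially the paper's argument: both reduce to finitely many paths on which point evaluation is injective, take a zero-mass combination of Dirac masses that annihilates $\mathcal{F}_{\text{real}}+\text{span}\{1\}$ but not $f_{\bm{h}^\dagger}$, and convert it into two probability measures. The only real difference is that you set $Q_\theta=P+\varepsilon\sum_j c_j\delta_{\bm{\omega}_j}$ around a uniform $P$, whereas the paper normalizes the Jordan parts $\mu^{\pm}$. Your version is also slightly more careful in two places. You prove that the evaluation points exist, which the paper only asserts. You also work with a basis of $\mathcal{F}_{\text{real}}+\text{span}\{1\}$, while the paper's claim that $1,\phi_1,\dots,\phi_m,g$ are linearly independent fails when $1\in\mathcal{F}_{\text{real}}$.
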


\begin{proof}
Let $\{\phi_1,\dots,\phi_m\}$ be a basis of $\mathcal{F}_{\text{real}}$, and define $g \coloneq f_{\bm{h}^\dagger}$. By assumption, $g \notin \text{span}\{1,\phi_1,\dots,\phi_m\}$. Hence the $m+2$ functions $1,\phi_1,\dots,\phi_m,g$ are linearly independent. Therefore, there exist points $\bm{\omega}_1,\dots,\bm{\omega}_{m+2}\in\Omega$ such that the matrix
\begin{align}
A=
\begin{pmatrix}
1 & \cdots & 1\\
\phi_1(\bm{\omega}_1) & \cdots & \phi_1(\bm{\omega}_{m+2})\\
\vdots & \ddots & \vdots\\
\phi_m(\bm{\omega}_1) & \cdots & \phi_m(\bm{\omega}_{m+2})\\
g(\bm{\omega}_1) & \cdots & g(\bm{\omega}_{m+2})
\end{pmatrix}\in\mathbb{R}^{(m+2)\times(m+2)}.
\end{align}
has full rank $m+2$. Now let $B\in\mathbb{R}^{(m+1)\times(m+2)}$ be the submatrix consisting of the first $m+1$ rows of $A$, i.e.,
\begin{align}
B=
\begin{pmatrix}
1 & \cdots & 1\\
\phi_1(\bm{\omega}_1) & \cdots & \phi_1(\bm{\omega}_{m+2})\\
\vdots & \ddots & \vdots\\
\phi_m(\bm{\omega}_1) & \cdots & \phi_m(\bm{\omega}_{m+2})
\end{pmatrix}.
\end{align}
Since $B$ has more columns than rows, there exists a nonzero vector
\begin{align}
\bm{a}=^\top\!\begin{pmatrix}a_1 & \ldots & a_{m+2}\end{pmatrix} \in \ker(B).
\end{align}
Moreover, because $A$ has full rank, $\bm{a}$ satisfies
\begin{align}
\sum_{i=1}^{m+2} a_i g(\bm{\omega}_i)\neq0.
\end{align}
Define the signed measure
\begin{align}
\mu \coloneq \sum_{i=1}^{m+2} a_i \delta_{\bm{\omega}_i}.
\end{align}
Since $\bm{a}\in\ker(B)$,
\begin{align}
\mu(\Omega)=\sum_{i=1}^{m+2} a_i =0,\quad \forall f\in\mathcal{F}_{\text{real}}~~~ \int f\,d\mu = 0,~~ \int g\,d\mu \neq 0.
\end{align}
Write the Jordan decomposition of $\mu$ as $\mu=\mu^+-\mu^-$. Since $\mu(\Omega)=0$, we have
\begin{align}
\mu^+(\Omega)=\mu^-(\Omega)\eqcolon c >0.
\end{align}
Define probability measures
\begin{align}
P\coloneq \frac{\mu^+}{c},\quad Q\coloneq \frac{\mu^-}{c}.
\end{align}
\hashimoto{
Then,
\begin{align}
\Delta_{\text{real}}(P,Q_{\theta};\mathcal{F}_{real})
&= 0,
\end{align}
since for all $f\in\mathcal{F}_{real}$,
\begin{align}
\mathbb{E}_P[f]-\mathbb{E}_{Q_\theta}[f]
= \frac{1}{c}\int f\,d\mu
=0.
\end{align}}
On the other hand,
\begin{align}
\mathbb{E}_P[g]-\mathbb{E}_{Q_\theta}[g]
= \frac{1}{c}\int g\,d\mu
\neq 0.
\end{align}
Since $g=f_{\bm{h}^\dagger}$, we obtain
\begin{align}
\Gamma(P,Q_{\theta};\mathcal H)\ge\left|\mathbb{E}_P[f_{\bm{h}^\dagger}]-\mathbb{E}_{Q_\theta}[f_{\bm{h}^\dagger}]\right|>0.
\end{align}
This proves the claim.
\end{proof}
Theorem~\ref{Theorem:small_realism_incompatibility} shows that the gap between realism and compatibility is not merely conceptual, but structural. Whenever the realism function class fails to span the loss-relevant directions, one can explicitly construct generators that are indistinguishable under the realism metric, yet induce different hedging risks. Importantly, this is not a tautological statement that \textit{mismatch leads to mismatch}, but a constructive one: even arbitrarily small realism discrepancy does not preclude large differences in hedging performance. In this sense, a generator may systematically distort the loss landscape perceived by the hedger, despite reproducing stylized facts.

\subsection{Implication for Generator Design}

Theorems~\ref{Theorem:failure_decomp} and \ref{Theorem:small_realism_incompatibility} together suggest that generator design should not be guided solely by realism metrics. Instead, two complementary requirements must be satisfied. First, the generator should be structured so as to avoid inducing large learning error; that is, it should admit stable and accurate optimization of strategies under practical constraints such as finite data, limited model capacity, and restricted training procedures. Second, instead of being realistic, the generator should be compatible with the intended strategy class, meaning that it should preserve the loss landscape induced by admissible strategies, taking into account market frictions such as transaction costs and execution constraints.

From this perspective, the quality of a scenario generator is inherently hedger-dependent. Rather than aiming to faithfully reproduce all aspects of the data-generating process, an effective generator is designed to be both learnable and decision-compatible for the target class of strategies. This shifts the objective from unconditional realism to task-dependent compatibility, where the relevant notion of closeness is determined by the downstream decision problem.

\section{Experiment} 

% Theory-driven analysis. 
% 理論では，deep hedgingのシナリオ生成器にとってLearnability/Compatibilityが本質的な問題であることと，CompatibilityとRealismが構造的に乖離可能であることを示した．
% 実験では，理論で導入した概念が実際に観測されることを検証する
    % Realism <-> Stylized Facts距離
    % Leaninig error <-> Generalization Gap / Stability
    % Compatibility -> Performanceの非単調性として示唆

% main claim: 理論で示した見方が実験的にも妥当であることが示された
% 1) higher realism does not directly imply higher hedge performance
% 2) "Goodness" of scenario generator depends on hedger class and option setting

We examine whether learnability and compatibility indeed govern the usefulness of synthetic scenario generators for deep hedging in practice. To this end, we prepare four scenario generators with different levels of statistical structure, four task configurations with different hedging difficulty, and three hedgers with different inductive biases and model capacities. We first assess the realism of each generator by comparing the stylized facts of generated paths with those of the target market, and then estimate learnability through training stability and generalization-related diagnostics. Finally, for every combination of generator, option, and hedger, we train hedging models and compare their out-of-sample performance using real data. Through this design, we investigate whether increasing realism consistently improves hedging performance, or whether performance depends critically on the interaction between the scenario generator, the hedger class, and the task setting.

\paragraph{\textbf{Scenario Generators}} We consider four scenario generators $Q_\theta$ for modeling underlying price process.

\begin{itemize}[leftmargin=*]
\item Geometric Brownian Motion (GBM)
\item Merton Jump Process (Merton)~\citep{merton}
\item Heston's Stochastic Volatility Model (Heston)~\citep{heston}
\item Variational AutoEncoder (VAE)
\end{itemize}
For GBM and Merton, the drift term is fixed to zero. For Heston, price paths are generated using Andersen’s Quadratic-Exponential discretization scheme~\citep{andersen_qe_m}. For the VAE, both the encoder and decoder are implemented as three-layer multilayer perceptrons (MLP) with 16 hidden units per layer, and the latent dimension is set to 16. We use ReLU activations for all hidden layers, while the decoder output layer uses an identity activation.

\paragraph{\textbf{Task Configurations}} We consider four task configurations defined by the combination of:
\begin{itemize}[leftmargin=*]
\item Option Type: European call option, Lookback call option
\item Transaction Cost Levels: $1.000\times10^{-4}$, $1.000\times10^{-3}$
\end{itemize}
The initial asset price is set to 1.000, with a maturity of 20 days (i.e., 20 decision steps), and the strike price is fixed at 1.000. In the following, we denote each task by the format: Option Type (Transaction Cost). For example, European ($10^{-4}$) refers to the hedging task of a European call option under a transaction cost level of $1.000\times10^{-4}$.

\paragraph{\textbf{Hedging Models}} We consider the following three types of stationary hedger classes $\mathcal{H}$:
\begin{itemize}[leftmargin=*]
\item Linear Hedger (LinearHedger)
\item No-Transaction Band Network (NTBNet)~\citep{ntb_net}
\item MLP-based Hedger (MLPHedger)
\end{itemize}
The LinearHedger determines the position $h_t$ as a linear combination of input features. The NTBNet predicts a no-transaction band using a neural network and updates the position only when the previous position $h_{t-1}$ lies outside the band, adjusting it to fall within the predicted region. The MLPHedger directly outputs $h_t$ using a four-layer MLP with 32 hidden units per layer. We use ReLU activations for all hidden layers and an identity activation for the output layer, with layer normalization applied to intermediate layers. The NTBNet shares the same network architecture. For input features, the Linear and MLP hedgers use four features: log-moneyness, time-to-expiry, realized volatility of previous 10 days, and previous position. The NTBNet uses Black–Scholes delta~\citep{bs} and previous position as inputs. The hedgers are trained by minimizing one of the following risk measures: Entropic Risk Measure (ERM; $\rho_\mu^{\text{ERM}}$) and Conditional Value-at-Risk (CVaR; $\rho_\mu^{\text{CVaR}}$).
\begin{align}
\rho_\mu^{\text{ERM}}(\bm{h};\lambda)
&= \frac{1}{\lambda}\log
\mathbb{E}\left[\exp\left(\lambda C(\bm{\omega},\bm{h})\right)\right],
\label{Eq:erm}\\
\rho_\mu^{\text{CVaR}}(\bm{h};\alpha)
&= \mathbb{E}\left[
C(\bm{\omega},\bm{h})
\,\middle|\,
C(\bm{\omega},\bm{h})\geq
\mathrm{VaR}_{\alpha}(C)
\right].
\label{Eq:cvar}
\end{align}
We independently train three hedgers by minimizing ERM with $\lambda=1$, ERM with $\lambda=50$, and CVaR with $\alpha=0.90$, respectively.

\paragraph{\textbf{Data and Calibration}} We calibrate the scenario generators using daily data of the Nikkei 225 index (N225). The calibration period spans from January 1, 2018 to December 31, 2022 (1,218 trading days), while the test period covers January 1, 2023 to February 28, 2026 (771 trading days). For training the VAE, we augment the dataset with additional assets over the same period~\footnote{The ticker codes used for training the VAE are: N225, GSPC, DJI, IXIC, RUT, FTSE, GDAXI, FCHI, STOXX, HSI, KS11, NSEI, AXJO, SPY, QQQ, XLF, XLK, VIX, TNX, NQ=F, JPY=X, EURUSD=X, BTC-USD, and ETH-USD.}. For GBM, Merton, and Heston, calibration is performed by minimizing the {\em stylized facts distance}~\citep{stylized_facts_distance}, which is defined as the discrepancy between various descriptive statistics computed from real and synthetic data. Specifically, we first initialize parameters via moment matching between synthetic and empirical return series. We then refine the parameters using the L-BFGS-B algorithm to numerically minimize the stylized facts distance.

\begin{table*}[t]
    \centering
    \caption{Realism metrics for the four scenario generators (lower is better). \textbf{Bold} indicates the best value in each column.}
    \label{Tab:realism_metrics}
    \setlength{\tabcolsep}{8pt}
    \renewcommand{\arraystretch}{1.2}
    {\small
    \begin{tabular}{lcccccccc}
        \toprule
         & \makecell[c]{KS ($\times 10^{-2}$)} & \makecell[c]{MeanAbsRet\\ ($\times 10^{-4}$)} & Kurt & Skew ($\times 10^{-1}$) & Hill (Avg) & \makecell[c]{ACF (Raw)\\ ($\times 10^{-2}$)} & \makecell[c]{ACF (Abs)\\ ($\times 10^{-1}$)} & \makecell[c]{Hurst (Abs)\\ ($\times 10^{-1}$)} \\
        \midrule
        GBM    & 8.619 & 17.084 & 16.032 & 8.619 & 4.047 & 4.172 & 1.875 & 1.331 \\
        Merton & 7.992 & 12.170 & \textbf{12.752} & 7.401 & \textbf{0.237} & 4.250 & 1.946 & 1.239 \\
        Heston & \textbf{4.315} & 3.444 & 13.876 & 6.527 & 1.963 & 5.153 & \textbf{0.809} & 0.789 \\
        VAE    & 7.227 & \textbf{1.120} & 15.145 & \textbf{6.399} & 2.533 & \textbf{1.773} & 1.028 & \textbf{0.678} \\
        \bottomrule
    \end{tabular}}
\end{table*}

\paragraph{\textbf{Evaluation Metrics}} We evaluate the scenario generators along three dimensions: 1)realism, 2)learnability, and 3)hedger performance. For \textbf{realism}, we assess how closely the synthetic data reproduces the statistical properties of real data using the following eight metrics.
\begin{itemize}[leftmargin=*]
\item Kolmogorov–Smirnov distance between return distributions (KS)
\item Absolute difference in the mean absolute return (MeanAbsRet)
\item Absolute difference in kurtosis (Kurt)
\item Absolute difference in skewness (Skew)
\item Average absolute difference in Hill tail indices~\citep{hill} computed from the top 5\% of absolute, negative, and positive returns (Hill (Avg))
\item Mean absolute deviation of the autocorrelation function (ACF) of returns over lags 1–9 (ACF (Raw))
\item Mean absolute deviation of the ACF of absolute returns over lags 1–9 (ACF (Abs))
\item Absolute difference in the Hurst exponent of absolute returns (Hurst (Abs))
\end{itemize}

Regarding \textbf{learnability}, since the learning error in Eq.~\ref{Eq:decomp_performance_gap} is not directly observable, we approximate the relative learnability of each scenario generator through proxies based on generalization and training stability. Specifically, we use the training and validation losses obtained during hedger training, focusing on epochs $e\geq 100$. We define the {\em generalization gap} at epoch $e$ as
\begin{align}
\text{GenGap}_e=\frac{L_{\text{valid}}(e)-L_{\text{train}}(e)}{L_{\text{valid}}(e)},\label{Eq:gen_gap}
\end{align}
where $L_{\text{train}}(e)$ and $L_{\text{valid}}(e)$ denote the training and validation losses, respectively. To capture training stability, we measure the temporal variation of the validation loss, defined as
\begin{align}
\text{Stability}_e=\left|L_{\text{valid}}(e)-L_{\text{valid}}(e-1)\right|.
\end{align}
We summarize these quantities over epochs $e \geq 100$ to assess the generalizability and stability of the learned hedging strategies.

We evaluate the resulting \textbf{hedger performance} using the same risk measure and parameter employed during training. Specifically, hedgers trained with ERM($1$), ERM($50$), and CVaR($0.90$) are evaluated using ERM($1$), ERM($50$), and CVaR($0.90$), respectively.

\section{Results and Discussions}

\begin{figure}[t]
    \centering
    \includegraphics[width=0.49\textwidth]{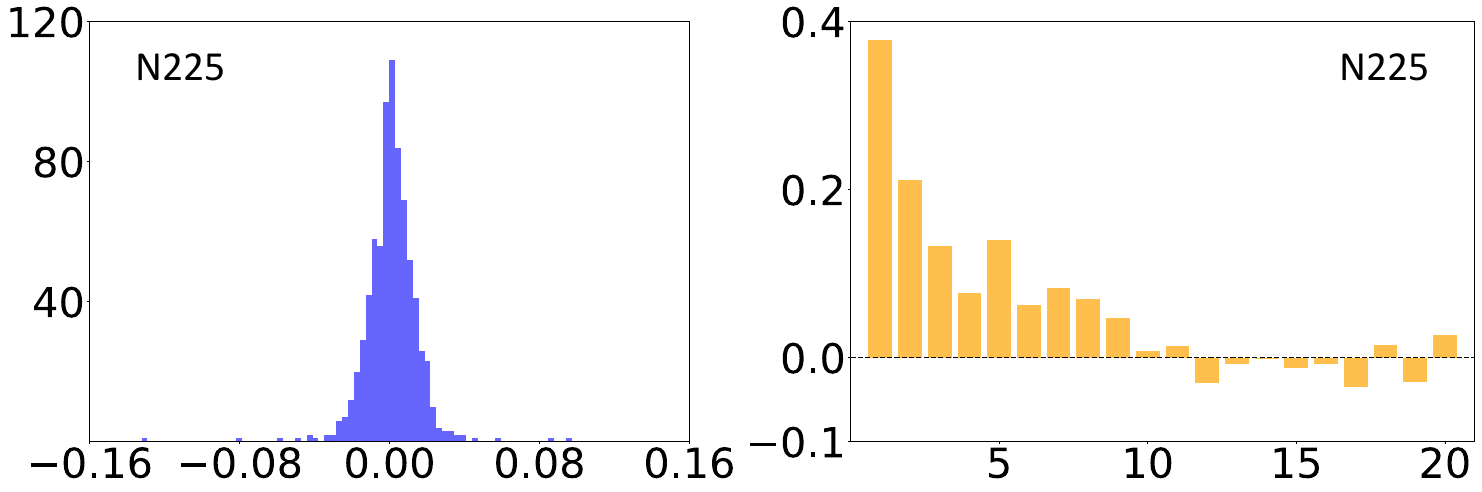}
    \caption{
    Test data visualization (N225, 2023/01/01--2026/02/28). The left panel shows the histogram of returns, while the right panel shows the ACF of absolute returns.
    }
    \label{Fig:return_acf_real}
\end{figure}

\begin{figure}[t]
    \centering
    \includegraphics[width=0.49\textwidth]{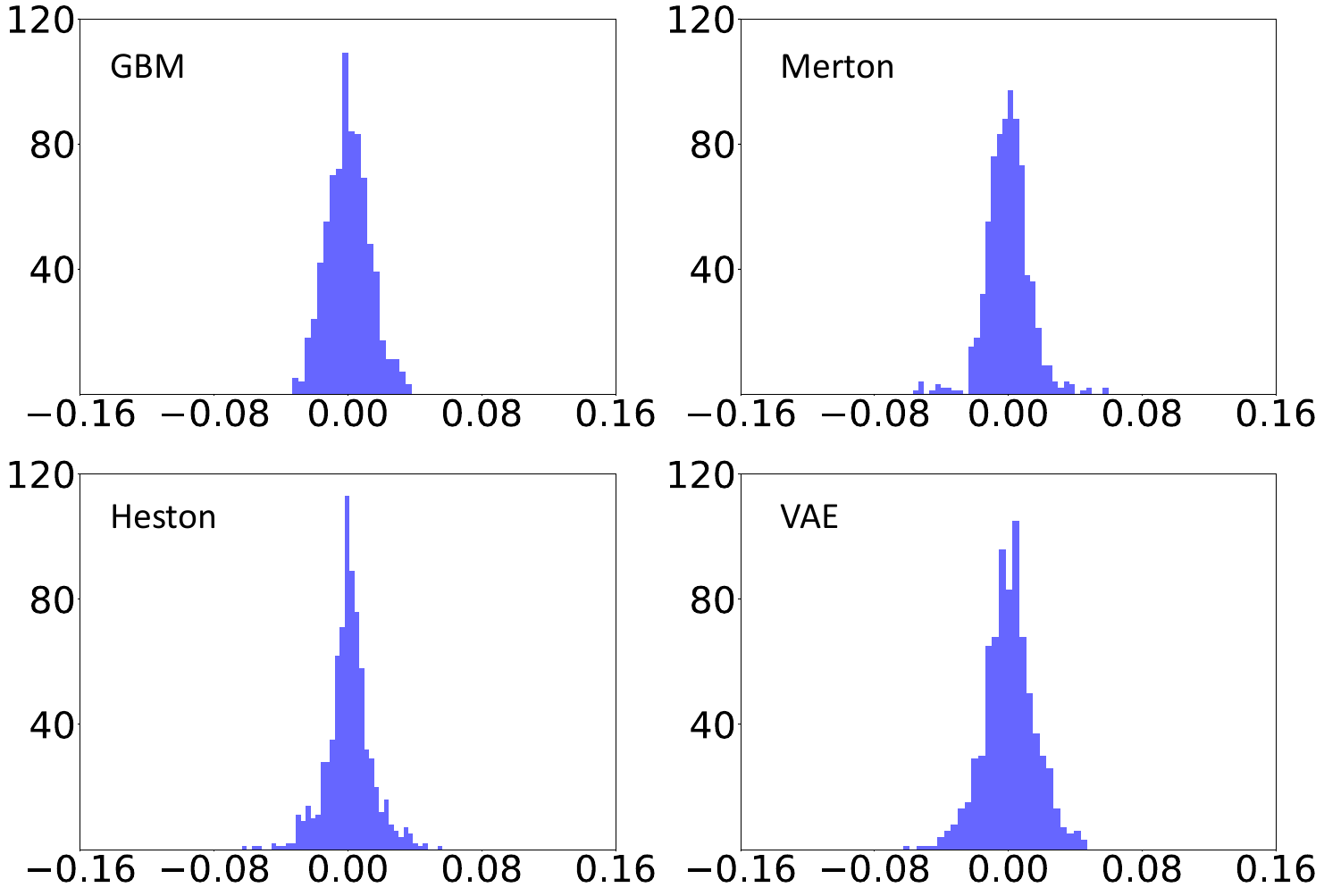}
    \caption{
    Histograms of returns generated by the four scenario generators (GBM, Merton, Heston, and VAE).
    }
    \label{Fig:return_synthetic}
\end{figure}

\subsection{Realism and Learnability of Generators}

\begin{figure}[t]
    \centering
    \includegraphics[width=0.49\textwidth]{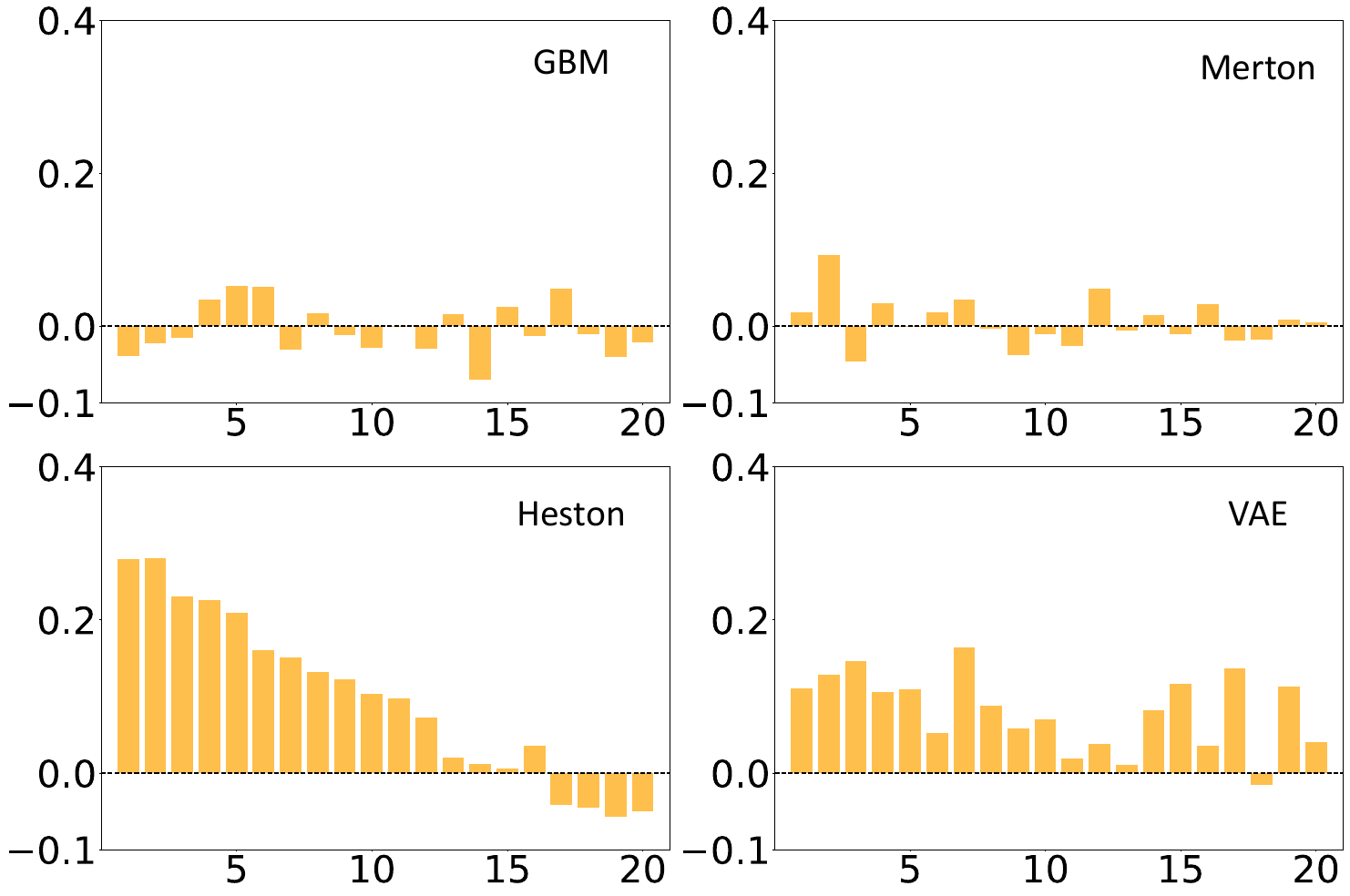}
    \caption{
    ACF of absolute returns generated by the four scenario generators (GBM, Merton, Heston, and VAE).
    }
    \label{Fig:acf_synthetic}
\end{figure}

Figure~\ref{Fig:return_acf_real} illustrates the stylized facts of daily returns in the test period. The left panel shows the return distribution, characterized by heavy tails and negative skewness (kurtosis: $15.943$, Hill index of absolute returns: $2.557$, skewness: $-0.733$). 
The right panel presents the ACF of absolute returns, demonstrating persistent positive autocorrelation (Hurst exponent: $0.642$). Table~\ref{Tab:realism_metrics} reports eight realism metrics for the four scenario generators, while Figures~\ref{Fig:return_synthetic} and~\ref{Fig:acf_synthetic} provide their corresponding return distributions and autocorrelation structures. GBM performs the worst across almost all metrics, reflecting its simplifying assumptions of Gaussianity and independent increments. It fails to capture both heavy tails and temporal dependencies observed in real data. Merton improves the tail behavior by incorporating jump components, achieving better performance in tail-related metrics ((e.g., Kurt). Heston provides a more balanced representation by introducing stochastic volatility, which enables it to capture both distributional properties and temporal dynamics to a certain extent. This is reflected in its relatively strong performance in both ACF-based and distributional metrics (e.g., KS). VAE, as a data-driven generative model, shows strength in capturing long-range dependencies, leading to improved performance in metrics related to temporal structure (e.g., Hurst exponent). Overall, these results highlight that different generators specialize in different aspects of realism such as distributional features and temporal dependence, suggesting that realism is inherently multi-dimensional and cannot be captured by a single metric.

Table~\ref{Tab:learnability_metrics} reports the learnability proxies, namely the GenGap and Stability, for each generator--hedger pair. Across hedger classes, GBM consistently exhibits the best learnability, achieving the lowest GenGap and Stability. This suggests that its simple structure makes it easier for hedgers to approximate the optimal strategy under the synthetic distribution. In contrast, the other generators show comparable performance to each other. Notably, for the MLPHedger, the performance gap between GBM and the other generators becomes less pronounced. This suggests that the higher representational capacity of the MLPHedger allows it to absorb structural differences across generators, reducing the impact of generator complexity on learnability. Overall, these results indicate that learnability is jointly determined by the complexity of the generator and the expressive power of the hedger, rather than by the generator alone.

Taken together, these results indicate a trade-off between realism and learnability. While GBM is unrealistic in terms of stylized facts, it exhibits high learnability due to its structural simplicity. In contrast, the other models capture different aspects of realism such as tail behavior, stochastic volatility, and temporal dependence, but do not consistently improve learnability.

\begin{table*}[t]
    \centering
    \caption{Learnability metrics (lower is better). Each cell reports mean (std). \textbf{Bold} values indicate the best score within each hedger class.}
    \label{Tab:learnability_metrics}
    \setlength{\tabcolsep}{6pt}
    \renewcommand{\arraystretch}{1.2}
    {\small
    \begin{tabular}{lcccccccccccc}
        \toprule
        & \multicolumn{4}{c}{LinearHedger} 
        & \multicolumn{4}{c}{NTBNet} 
        & \multicolumn{4}{c}{MLPHedger} \\
        \cmidrule(lr){2-5} \cmidrule(lr){6-9} \cmidrule(lr){10-13}
        & GBM & Merton & Heston & VAE
        & GBM & Merton & Heston & VAE
        & GBM & Merton & Heston & VAE \\
        \midrule
        \makecell[l]{$\text{GenGap}_e$\\ ($\times 10^{-2}$)}
        & \makecell[c]{\textbf{2.847}\\ ($\pm0.031$)} & \makecell[c]{7.296\\ ($\pm0.111$)} & \makecell[c]{3.599\\ ($\pm0.038$)} & \makecell[c]{3.552\\ ($\pm0.030$)}
        & \makecell[c]{\textbf{7.782}\\ ($\pm0.787$)} & \makecell[c]{19.627\\ ($\pm1.339$)} & \makecell[c]{11.354\\ ($\pm0.593$)} & \makecell[c]{15.425\\ ($\pm1.375$)}
        & \makecell[c]{\textbf{4.295}\\ ($\pm0.049$)} & \makecell[c]{4.398\\ ($\pm0.048$)} & \makecell[c]{4.496\\ ($\pm0.054$)} & \makecell[c]{4.579\\ ($\pm0.070$)} \\
        \vspace{0.08pt}
        \makecell[l]{$\text{Stability}_e$\\ ($\times 10^{-4}$)}
        & \makecell[c]{\textbf{0.219}\\ ($\pm0.164$)} & \makecell[c]{0.379\\ ($\pm0.280$)} & \makecell[c]{0.256\\ ($\pm0.187$)} & \makecell[c]{0.275\\ ($\pm0.196$)}
        & \makecell[c]{\textbf{0.385}\\ ($\pm0.426$)} & \makecell[c]{0.546\\ ($\pm0.549$)} & \makecell[c]{0.471\\ ($\pm0.370$)} & \makecell[c]{0.433\\ ($\pm0.381$)}
        & \makecell[c]{0.309\\ ($\pm0.218$)} & \makecell[c]{0.378\\ ($\pm0.209$)} & \makecell[c]{\textbf{0.296}\\ ($\pm0.223$)} & \makecell[c]{0.354\\ ($\pm0.278$)} \\
        \bottomrule
    \end{tabular}}
\end{table*}

\subsection{Hedger Performance across Tasks}

Table~\ref{Tab:best_models} reports the best-performing generator--hedger pair for each task configuration. As shown in the table, no single scenario generator or hedger consistently achieves the top performance across all tasks, indicating that there is no globally optimal choice independent of the task setting.

\begin{table}[t]
\centering
\caption{Best generator--hedger pair for each task configuration. For each task, the pair that achieves the best average performance over $\mathrm{ERM}(1), \mathrm{ERM}(50)$ and $\mathrm{CVaR}(0.90)$ among all generator--hedger combinations is reported.}
\label{Tab:best_models}
{\small
\begin{tabular}{ll}
\toprule
Task (Option(Cost)) & Best Generator--Hedger \\
\midrule
European ($10^{-4}$) & Merton--MLPHedger \\
European ($10^{-3}$) & GBM--LinearHedger \\
Lookback ($10^{-4}$) & Heston--MLPHedger \\
Lookback ($10^{-3}$) & GBM--NTBHedger \\
\bottomrule
\end{tabular}}
\end{table}

Table~\ref{Tab:generator_ranking} reports the generator rankings for each hedger--task setting, 
where each task is defined by the option type and transaction cost. The tasks are ordered by increasing difficulty: European options with low and high costs, followed by path-dependent Lookback options with low and high costs. For the LinearHedger, GBM consistently performs best in low- to moderate-difficulty settings, including European options and Lookback options with lower complexity. This strong performance is achieved despite its low realism. This suggests that, for simple hedger–task combinations, the lack of distributional fidelity does not necessarily impair compatibility: the structural simplicity of the generator remains sufficiently 
aligned with the hedger and the task, resulting in a small compatibility gap. Combined with its high learnability, this allows the hedger to effectively approximate and transfer the optimal strategy, ultimately leading to superior resulting hedger performance. However, in the most challenging setting (Lookback with high transaction cost), VAE achieves the best performance. This indicates that as the task complexity increases and the induced risk landscape becomes more intricate, the compatibility of simple generators such as GBM deteriorates. Although GBM remains easy to learn, its simplified distribution fails to capture the structural features relevant for the task, leading to a larger compatibility gap. In contrast, more expressive generators such as VAE provide a better approximation of the decision-relevant aspects of the true distribution, thereby improving compatibility and resulting hedger performance. Across all tasks, Heston consistently outperforms Merton, suggesting that incorporating stochastic volatility is more beneficial than jump-based tail modeling for this hedger class.

For NTBNet, the relative performance between GBM and VAE is largely reversed compared to the LinearHedger, with VAE frequently achieving top ranks. Moreover, the consistent advantage of Heston over Merton observed in the LinearHedger case diminishes. This indicates that the notion of compatibility is not invariant across hedger classes, but depends on the underlying function class induced by the architecture. In the case of NTBNet, which is designed to represent band-like policies with threshold-based trading under transaction costs, compatibility is defined over a restricted class of piecewise and sparse decision rules. This introduces decision-relevant structures, such as threshold-crossing dynamics, that were not essential for the 
LinearHedger, leading to a reorganization of the generator ranking.

For the MLP hedger, the preferred generator varies significantly with task difficulty. While simpler generators such as GBM and Merton perform well in easier settings (e.g., European options with low cost), more complex generators such as VAE and Heston become dominant in harder settings (e.g., Lookback options with high cost). This trend can be understood in light of the reduced learnability advantage of GBM observed in the LinearHedger case. As the expressive capacity of the hedger increases, the optimization difficulty associated with more complex generators becomes less restrictive, diminishing the relative benefit of simple distributions such as GBM. Instead, the limiting factor shifts toward the richness of the data distribution itself. In this regime, generators that provide more informative and structured signals such as VAE and Heston become increasingly advantageous, especially for complex tasks that require capturing path-dependence and higher-order dynamics. Consequently, as task difficulty increases, the performance becomes more sensitive to the expressive power of the generator, leading to a preference for more complex models.

Overall, the rankings exhibit non-monotonicity across generators, hedgers, and tasks. These results indicate that the {\em best} generator is not universal, but depends on the interaction between the generator, the hedger architecture, and the task characteristics.

% \begin{table*}[t]
%     \centering
%     \caption{Generator ranking (lower is better) for each hedger–task setting. Each cell reports the average rank over CVaR$(0.90)$ and CVaR$(0.95)$. \textbf{Bold} values indicate the best (lowest) rank within each hedger–task setting.}
%     \label{Tab:generator_ranking}
%     \setlength{\tabcolsep}{6pt}
%     \renewcommand{\arraystretch}{1.2}
%     {\small
%     \begin{tabular}{lcccccccccccc}
%         \toprule
%         & \multicolumn{4}{c}{LinearHedger}
%         & \multicolumn{4}{c}{NTBNet}
%         & \multicolumn{4}{c}{MLPHedger} \\
%         \cmidrule(lr){2-5} \cmidrule(lr){6-9} \cmidrule(lr){10-13}
%         \makecell[l]{Task\\ (Option(Cost))}
%         & GBM & Merton & Heston & VAE
%         & GBM & Merton & Heston & VAE
%         & GBM & Merton & Heston & VAE \\
%         \midrule
%         European ($10^{-4}$)
%         & \textbf{1.00} & 3.50 & 2.00 & 3.50
%         & 2.00 & 3.00 & 4.00 & \textbf{1.00}
%         & 2.00 & \textbf{1.00} & 3.00 & 4.00 \\
        
%         European ($10^{-3}$)
%         & \textbf{1.00} & 4.00 & 3.00 & 2.00
%         & 2.50 & 4.00 & 2.50 & \textbf{1.00}
%         & 4.00 & 2.00 & 3.00 & \textbf{1.00} \\
        
%         Lookback ($10^{-4}$)
%         & \textbf{1.00} & 4.00 & 2.00 & 3.00
%         & 3.00 & 2.00 & 4.00 & \textbf{1.00}
%         & 4.00 & 3.00 & \textbf{1.00} & 2.00 \\
        
%         Lookback ($10^{-3}$)
%         & 2.00 & 4.00 & 3.00 & \textbf{1.00}
%         & \textbf{1.00} & 2.00 & 3.50 & 3.50
%         & 4.00 & 3.00 & 2.00 & \textbf{1.00} \\
%         \bottomrule
%     \end{tabular}
%     }
% \end{table*}

\begin{table*}[t]
    \centering
    \caption{Generator ranking (lower is better) for each hedger–task setting. Each cell reports the average rank over $\mathrm{ERM}(1), \mathrm{ERM}(50)$ and $\mathrm{CVaR}(0.90)$. \textbf{Bold} values indicate the best (lowest) rank within each hedger–task setting.}
    \label{Tab:generator_ranking}
    \setlength{\tabcolsep}{6pt}
    \renewcommand{\arraystretch}{1.2}
    {\small
    \begin{tabular}{lcccccccccccc}
        \toprule
        & \multicolumn{4}{c}{LinearHedger}
        & \multicolumn{4}{c}{NTBNet}
        & \multicolumn{4}{c}{MLPHedger} \\
        \cmidrule(lr){2-5} \cmidrule(lr){6-9} \cmidrule(lr){10-13}
        \makecell[l]{Task\\ (Option(Cost))}
        & GBM & Merton & Heston & VAE
        & GBM & Merton & Heston & VAE
        & GBM & Merton & Heston & VAE \\
        \midrule
        European ($10^{-4}$)
        & \textbf{1.33} & 3.50 & 1.67 & 3.50
        & 2.00 & 3.00 & 4.00 & \textbf{1.00}
        & 2.00 & \textbf{1.00} & 3.00 & 4.00 \\
        
        European ($10^{-3}$)
        & 1.67 & 4.00 & 3.00 & \textbf{1.33}
        & 4.00 & 3.00 & \textbf{1.33} & 1.67
        & 4.00 & 2.00 & 3.00 & \textbf{1.00} \\
        
        Lookback ($10^{-4}$)
        & \textbf{1.00} & 4.00 & 2.00 & 3.00
        & 3.00 & 2.00 & 4.00 & \textbf{1.00}
        & 4.00 & 3.00 & \textbf{1.00} & 2.00 \\
        
        Lookback ($10^{-3}$)
        & 2.00 & 4.00 & 3.00 & \textbf{1.00}
        & \textbf{1.00} & 2.00 & 3.50 & 3.50
        & 4.00 & 3.00 & 2.00 & \textbf{1.00} \\
        \bottomrule
    \end{tabular}
    }
\end{table*}

\section{Conclusion}

% generative financeの目的を再定義．realism-centricな評価から，decision-centricな評価へ
We reconsider the role of scenario generators in deep hedging and shift the evaluation paradigm from realism-centric to decision-centric. Challenging the common assumption that higher realism improves hedging performance, we theoretically show that the degradation in hedging performance is decomposed into learning error and a compatibility gap, and that realism and compatibility need not align. Furthermore, we empirically find that performance is governed not by the generator alone, but by its interaction with the hedger and the task. These findings provide an implication to the question of what constitutes a {\em good} scenario generator: rather than maximizing realism, a generator should balance learnability and compatibility, which requires a risk structure aligned with the target hedger class and task. %This perspective opens a new direction for generative finance beyond fidelity-driven approaches.

% compatibility-awareなgenerator設計の提案
Our study also has limitations. In particular, compatibility is difficult to measure directly in practice, and must be assessed through indirect proxies. Addressing this challenge remains an important direction for future work. A promising avenue for future research is the development of compatibility-aware deep hedging, where scenario generation and hedging are jointly designed to optimize decision performance. This perspective opens a new direction for generative finance beyond fidelity-driven approaches.

\bibliographystyle{ACM-Reference-Format}
\bibliography{sample-base}

@article{deep_hedging,
    author = {Buehler, Hans and Gonon, Lukas and Teichmann, Josef and Wood, Ben},
    title = {Deep hedging},
    journal = {Quantitative Finance},
    volume = {19},
    number = {8},
    pages = {1271--1291},
    year = {2019},
    doi = {10.1080/14697688.2019.1571683}
}

@inproceedings{deep_rl_hedging1,
    author = {Murray, Phillip and Wood, Ben and Buehler, Hans and Wiese, Magnus and Pakkanen, Mikko},
    title = {{Deep hedging: Continuous reinforcement learning for hedging of general portfolios across multiple risk aversions}},
    year = {2022},
    booktitle = {Proceedings of the ACM International Conference on AI in Finance},
    pages = {361--368},
    doi = {10.1145/3533271.3561731},
}

@article{deep_rl_hedging2,
    author = {Marzban, Saeed and Delage, Erick and Yu-Meng Li, Jonathan},
    title = {Deep reinforcement learning for option pricing and hedging under dynamic expectile risk measures},
    journal = {Quantitative Finance},
    volume = {23},
    number = {10},
    pages = {1411--1430},
    year = {2023},
    doi = {10.1080/14697688.2023.2244531},
}

@inproceedings{deep_rl_hedging3,
    author = {Malekzadeh, Parvin and Poulos, Zissis and Chen, Jacky and Wang, Zeyu and Plataniotis, Konstantinos N},
    title = {{EX-DRL: Hedging against heavy losses with extreme distributional reinforcement learning}},
    year = {2024},
    doi = {10.1145/3677052.3698668},
    booktitle = {Proceedings of the ACM International Conference on AI in Finance},
    pages = {370--378},
    numpages = {9},
}

@inproceedings{adversarial_hedging,
    author = {Hirano, Masanori and Minami, Kentaro and Imajo, Kentaro},
    title = {{Adversarial deep hedging: Learning to hedge without price process modeling}},
    year = {2023},
    booktitle = {Proceedings of the ACM International Conference on AI in Finance},
    pages = {19--26},
    doi = {10.1145/3604237.3626846},
}

@inproceedings{sig_former,
    author = {Tong, Anh and Nguyen-Tang, Thanh and Lee, Dongeun and Tran, Toan M and Choi, Jaesik},
    title = {{SigFormer: Signature transformers for deep hedging}},
    year = {2023},
    booktitle = {Proceedings of the ACM International Conference on AI in Finance},
    pages = {1240--132},
    doi = {10.1145/3604237.3626841},
}

@article{ntb_net,
    title   = {{No-transaction band network: A neural network architecture for efficient deep hedging}},
    author  = {Imaki, Shota and Imajo, Kentaro and Ito, Katsuya and Minami, Kentaro and Nakagawa, Kei},
    journal = {The Journal of Financial Data Science},
    volume  = {5},
    number  = {2},
    pages   = {84--99},
    year    = {2023},
    doi = {10.3905/jfds.2023.1.125},
}

@inproceedings{fast_deep_hedging,
    author = {Mueller, Konrad and Akkari, Amira and Gonon, Lukas and Wood, Ben},
    title = {Fast deep hedging with second-order optimization},
    year = {2024},
    isbn = {9798400710810},
    doi = {10.1145/3677052.3698604},
    booktitle = {Proceedings of the ACM International Conference on AI in Finance},
    pages = {319--327},
    numpages = {9},
}

@inproceedings{proto_hedging,
    author = {Faloughi, Lisa and Guo, Ce and Luk, Wayne},
    title = {{ProtoHedge: Interpretable hedging with market prototypes}},
    year = {2025},
    booktitle = {Proceedings of the ACM International Conference on AI in Finance},
    pages = {202--210},
    doi = {10.1145/3768292.3770347},
}

@inproceedings{get_real,
    author = {Vyetrenko, Svitlana and Byrd, David and Petosa, Nick and Mahfouz, Mahmoud and Dervovic, Danial and Veloso, Manuela and Balch, Tucker},
    title = {{Get real: Realism metrics for robust limit order book market simulations}},
    year = {2021},
    booktitle = {Proceedings of the ACM International Conference on AI in Finance},
    articleno = {2},
    doi = {10.1145/3383455.3422561},
}

@InProceedings{realism_based_evaluation,
    author = {Hoang, Duc Quang and Dang, Tran Khanh and Nguyen, Van Tam},
    title = {{Evaluation of synthetic data generation for time series: Privacy, security, and applications in finance}},
    booktitle = {Proceedings of the Future Data and Security Engineering},
    year = {2026},
    doi = {10.1007/978-981-95-4724-1_9},
}

@article{stylized_facts,
    author = {Cont, Rama},
    title = {{Empirical properties of asset returns: Stylized facts and statistical issues}},
    journal = {Quantitative Finance},
    year = {2001},
    volume = {1},
    number = {2},
    pages = {223--236},
    doi = {10.1080/713665670},
}

@article{empirical_deep_hedging,
    author = {Mikkil{\"a}, Oskari and Kanniainen, Juho},
    title = {Empirical deep hedging},
    journal = {Quantitative Finance},
    volume = {23},
    number = {1},
    pages = {111--122},
    year = {2023},
    doi = {10.1080/14697688.2022.2136037},
}

@Article{deep_hedging_w_jump,
    author = {Horvath, Blanka and Teichmann, Josef and \v{Z}uri\v{c}, \v{Z}an},
    title = {Deep hedging under rough volatility},
    journal = {Risks},
    volume = {9},
    year = {2021},
    number = {7},
    article-number = {138},
    doi = {10.3390/risks9070138},
}

@inproceedings{deeper_hedging,
    author = {Gao, Kang and Weston, Stephen and Vytelingum, Perukrishnen and Stillman, Namid and Luk, Wayne and Guo, Ce},
    title = {{Deeper hedging: A new agent-based model for effective deep hedging}},
    year = {2023},
    booktitle = {Proceedings of the ACM International Conference on AI in Finance},
    pages = {270--278},
    doi = {10.1145/3604237.3626913},
}

@Article{deep_hedging_w_gan,
    author = {Boursin, Nicolas and Remlinger, Carl and Mikael, Joseph},
    title = {Deep generators on commodity markets application to deep hedging},
    journal = {Risks},
    volume = {11},
    year = {2023},
    number = {1},
    article-number = {7},
    doi = {10.3390/risks11010007}
}

@inproceedings{cofindiff,
    title     = {{CoFinDiff: Controllable financial diffusion model for time series generation}},
    author    = {Tanaka, Yuki and Hashimoto, Ryuji and Takayanagi, Takehiro and Piao, Zhe and Murayama, Yuri and Izumi, Kiyoshi},
    booktitle = {Proceedings of the International Joint Conference on
               Artificial Intelligence},
    pages     = {9357--9365},
    year      = {2025},
    doi = {10.24963/ijcai.2025/1040},
}

@article{heston,
    author = {Heston, Steven L.},
    title = {A closed-form solution for options with stochastic volatility with applications to bond and currency options},
    journal = {The Review of Financial Studies},
    volume = {6},
    number = {2},
    pages = {327--343},
    year = {1993},
    doi = {10.1093/rfs/6.2.327},
}

@article{merton,
    title = {Option pricing when underlying stock returns are discontinuous},
    journal = {Journal of Financial Economics},
    volume = {3},
    number = {1},
    pages = {125--144},
    year = {1976},
    author = {Robert C. Merton},
    doi = {10.1016/0304-405X(76)90022-2}
}

@article{andersen_qe_m,
  author    = {Leif B. G. Andersen},
  title     = {{Simple and efficient simulation of the Heston stochastic volatility model}},
  journal   = {The Journal of Computational Finance},
  volume    = {11},
  number    = {3},
  pages     = {1--42},
  year      = {2008},
  doi = {10.21314/jcf.2008.189},
}

@article{bs,
    author = {Black, Fischer and Scholes, Myron},
    journal = {Journal of Political Economy},
    number = {3},
    pages = {637--654},
    title = {The pricing of options and corporate liabilities},
    volume = {81},
    year = {1973},
    doi = {10.1142/9789814759588_0001},
}

@article{stylized_facts_distance,
    author  = {Gao, Kang and Vytelingum, Perukrishnen and Luk, Wayne and Weston, Stephen and Guo, Ce},
    title   = {{Understanding intra-day price formation process by agent-based financial narket simulation: Calibrating the extended Chiarella model}},
    journal = {Wilmott},
    volume  = {2022},
    number  = {119},
    pages   = {22--38},
    year    = {2022},
    doi = {10.54946/wilm.11014},
}

@article{hill,
    author = {Bruce M. Hill},
    title = {A simple general approach to inference about the tail of a distribution},
    journal = {The Annals of Statistics},
    volume = {3},
    number = {5},
    pages = {1163--1174},
    year = {1975},
    doi = {10.1214/aos/1176343247},
}
\end{document}